\documentclass[
  aps,         
  pra,         
  reprint,     
  superscriptaddress
]{revtex4-2}
\usepackage{graphicx}
\usepackage[margin=1in]{geometry}
\usepackage{amsmath, amssymb, amsthm}
\usepackage{mathtools}
\usepackage{physics}
\usepackage{bm, dsfont}
\usepackage{thmtools}
\usepackage{microtype}
\usepackage{comment}
\usepackage{xcolor, todonotes}
\usepackage{enumitem}
\usepackage{color}
\usepackage{xspace}
\usepackage{xcolor}
\usepackage{comment}

\newcommand{\QMSG}{\textit{QMSG}\xspace}
\newcommand{\QMSGs}{\textit{QMSGs}\xspace}
\newcommand{\QMRG}{\textit{QMRG}\xspace}
\newcommand{\QMRGs}{\textit{QMRGs}\xspace}
\newcommand{\PQSS}{\textit{PQSS}\xspace}

\DeclareMathOperator{\sgn}{sgn}

\theoremstyle{plain}
\newtheorem{theorem}{Theorem}
\newtheorem{lemma}[theorem]{Lemma}

\newtheorem{corollary}{Corollary}[theorem]
\theoremstyle{definition}

\theoremstyle{remark}

\newtheorem{observation}{Observation}

\begin{document}
\title{Complete characterization of perfect quantum strategies in quantum magic rectangle games}
\author{Yueying Wu}
\affiliation{Department of Physics, University of Illinois Urbana-Champaign, Urbana, IL 61801, USA}

\begin{abstract}
We provide a complete structural characterization of perfect quantum strategies for arbitrary quantum magic rectangle games. We derive necessary and sufficient conditions that jointly constrain the shared state and measurement operators, establishing a unified analytical framework for perfect nonlocal strategies in this setting. Our results show that all perfect quantum solution states (PQSS) must exhibit a specific algebraic--combinatorial structure, ruling out a priori assumptions about particular entangled resources and clarifying the full class of states compatible with perfect correlations. 
We further show that perfect quantum strategies do not exist for $2 \times n$ quantum magic rectangle games with odd $n$, and introduce a corresponding quantum magic rectangle inequality to characterize optimal non-perfect strategies. While our results are structural, they may provide a foundation for future developments in quantum information and quantum cryptography based on perfect nonlocal correlations.
\end{abstract}

\maketitle

\section{Introduction}

Quantum nonlocal games provide a fundamental framework for characterizing nonlocal correlations in quantum theory and have been widely studied in quantum information science, with potential relevance to quantum cryptography~\cite{PhysRevA.93.062121}. Among them, quantum magic square and quantum magic rectangle games form a structured class in which perfect quantum strategies impose strong constraints on both shared entanglement and measurement operators.

Despite extensive investigations of specific instances, the structure of perfect quantum strategies in these games remains unclear in general. Existing approaches are largely case-by-case and do not reveal a unifying principle governing the form of the underlying quantum states and measurements.

In this work, we resolve this problem by providing a complete structural characterization of perfect quantum strategies for arbitrary quantum magic rectangle games. We derive necessary and sufficient conditions that jointly constrain the shared state and the measurement operators, thereby establishing a unified analytical framework for perfect strategies in this setting.

All results in this work are restricted to finite-dimensional quantum systems. Our results show that perfect quantum strategies are highly constrained: all perfect quantum solution states (PQSS) must admit a specific algebraic--combinatorial structure, which enforces a nontrivial organization of the global state across measurement configurations. This characterization rules out a priori assumptions about particular entangled resources and clarifies the full class of states compatible with perfect nonlocal correlations in these games.

We further investigate regimes where perfect quantum strategies do not exist. In particular, for $2 \times n$ quantum magic rectangle games with odd $n$, we give an analytic derivation of impossibility and introduce a corresponding quantum magic rectangle inequality, which provides a natural benchmark for studying optimal non-perfect quantum strategies.

While our focus is structural, the constraints identified here may serve as a foundation for future developments in quantum information and quantum cryptography based on perfect nonlocal correlations.

\section{Structural Characterization of Perfect Nonlocal Solutions}

A (two-player) non-local game, \(\mathcal{G} = (\mathcal{Q}_A, \mathcal{Q}_B, \mathcal{O}_A, \mathcal{O}_B, \mu, V)\). The two players, Alice and Bob, answer questions \((x, y) \in \mathcal{Q}_A \cross \mathcal{Q}_B\) sent by Eve, \(|\mathcal{Q}_A| = m\) and \(|\mathcal{Q}_B| = n\).
Alice and Bob are allowed to design strategies to maximize their probability of wining the game. We denote the correlation between questions \((x, y)\) and answers \((a, b) \in \mathcal{O}_A \cross \mathcal{O}_B\) by \(p(a, b | x, y)\). The probability of a wining a game \(\mathcal{G}\) under a strategy \(\mathcal{S}\) having correlation \(p(a, b|x, y)\) is given by 

\begin{equation}
\resizebox{\linewidth}{!}{$
\begin{split}
    &\omega(\mathcal{G}, \mathcal{S}) =\omega(\mathcal{G}, p) \\ 
    &= \sum_{\substack{x \in \mathcal{Q}_A, y \in \mathcal{Q}_B, \\ a \in \mathcal{O}_A, b \in \mathcal{O}_B}} \mu(x, y) V(a, b|x, y) p(a, b | x, y).
\end{split}
$}
\end{equation}
In quantum magic rectangle games (\QMRGs), Alice and Bob each answers with a finite-dimensional tuple, \(a = (a_1, a_2, ..., a_n)\) and \(b = (b_1, b_2, ..., b_m)\). The predicate, \(V(a, b | x, y)\), equals 1 if and only if
\(\prod_{i = 1}^n a_i = 1\), \(\prod_{j = 1}^m b_j = -1\) and \(a_y = b_x\). Otherwise, the predicate equals 0.

In a non-local game, Alice and Bob are working in finite dimensional hilbert spaces \(\Xi_A\) and \(\Xi_B\). Their state is measured by some POVMs, \(\{M_{a|x} : a \in \mathcal{O}_{A}, x \in \mathcal{Q}_A\}\) and \(\{N_{b|y} : b \in \mathcal{O}_{B}, y \in \mathcal{Q}_B\}\). 
The probability for Alice and Bob to answer a given question \((x, y)\) with \((a, b)\) is
\[
    p(a, b|x, y) = \mel{\Phi}{M_{a|x} \otimes N_{b|y}}{\Phi}.
\]
for some pure state \(\ket{\Phi} \in \Xi_A \otimes \Xi_B\). A perfect quantum strategy is a set of quantum correlatios \(p(a, b|x, y)\) that achieves the perfect quantum value \(\omega_q = 1\). 
Note that when \(m = n\), the setup reduces to quantum magic square games (\QMSGs).
Since by Naimark's dilation theorem, any POVM can be extended to a PVM acting on a larger Hilbert space, w.l.o.g, we limit our discussion to PVMs \{\(E_{a|x} : a \in \mathcal{O}_A, x \in \mathcal{Q}_A\)\} and \{\(F_{b|y} : b \in \mathcal{O}_B, y \in \mathcal{Q}_B\)\} acting on \(\mathcal{H}_A\) and \(\mathcal{H}_B\) respectively, where 
\[
\resizebox{\linewidth}{!}{$
    p(a, b|x, y) = \mel{\Phi}{M_{a|x} \otimes N_{b|y}}{\Phi} = \mel{\Psi}{E_{a|x} \otimes F_{b|y}}{\Psi}
$}
\]
for some extended pure state \(\ket{\Psi} \in \mathcal{H}_A \otimes \mathcal{H}_B\). If \(\omega_q = 1\), we define such a \(\ket{\Psi}\) as a perfect quantum solution state (\PQSS) of that quantum strategy. 

We observe that \(\omega = 1\) if and only if \(\sum_{a, b; V(a, b | x, y) = 1} p(a, b | x, y) = 1\) for \(\forall (x, y)\). Let \(S_A = \{a \in \mathcal{O}_A | \prod_{k = 1}^n a_k = 1\}\) and \(S_B = \{b \in \mathcal{O}_B | \prod_{k' = 1}^m b_{k'} = -1\}\)
We know that, for a perfect quantum strategy, 
\[
    1 = \sum_{a \in S_A, b \in S_B} \delta_{a_y = b_x} \mel{\Psi}{E_{a|x} \otimes F_{b|y}}{\Psi}
\]
for \(\forall (x, y)\). Let \(a_y = b_x = \Delta\), \(P_{x\Delta} = \sum_{a \in S_A; a_y = \Delta} E_{a|x}\), \(Q_{y|\Delta} = \sum_{b \in S_B; b_x = \Delta} F_{b|y}\), we can rewrite the above equation into 
\begin{equation} 
    1 = \sum_{\Delta} \mel{\Psi}{P_{x\Delta} \otimes Q_{y\Delta}}{\Psi},
    \label{eq:perfect_stra_sum}
\end{equation}
which implies that
\begin{equation}
    \ket{\psi} \in \bigcap_{x, y} \bigoplus_{\Delta} \text{Im}(P_{x\Delta} \otimes Q_{y\Delta}).
    \label{eq:psi_space}
\end{equation}
Equation \eqref{eq:psi_space} is referred to as the \PQSS canonical space. We implicitly write \(\Delta_{xy}\) as \(\Delta\) without ambiguity. It's obvious that \(\{P_{x\Delta}\}\) and \(\{Q_{y\Delta}\}\) are PVMs. We denote the sign value associated with \(P_{x\Delta} \otimes Q_{y\Delta}\) by \(\sgn(P_{x\Delta} \otimes Q_{y\Delta}) = \Delta\). 
We call the set of \(E_{a|x}\) and the set of \(F_{b|y}\) as an operator setup for the \(m \times n\) \QMRGs.

Let's start with the genereal construction of \(m \times n\) observables used for \QMRGs. 
During each round, the players act on their local components of the shared entanglement by executing a single measurement. Alice's procedure extracts the outcomes for an entire row, whereas Bob's measurement provides the data for a full column. 
These collective outcomes are well-defined because the operators comprising any row or column commute, thereby permitting a joint measurement in a shared basis.
Let's denote the observables used by Alice as \(O_{ij}^A\) and the observables used by Bob as \(O_{ij}^B\). They are 

\begin{align*}
    O_{ij}^A = \sum_{a} a_{j} E_{a|i}, \\
    O_{ij}^B = \sum_{a} b_{i} F_{b|j}.
\end{align*}
Consequently, Alice's observables fullfill the following properties: row-wise commutativity (\(\lbrack O_{ie}, O_{if} \rbrack = 0\), for \(\forall e, f\)) and row constraint \(\prod_{j} O_{ij} = \mathbb{I}_A\), while Bob's observables satisfy column-wise commutativity (\(\lbrack O_{ej}, O_{fj} \rbrack = 0\), for \(\forall e, f\)) and column constraint \(\prod_{i} O_{ij} = -\mathbb{I}_B\).
In general, \(O_{ij}^A\) are not forced to be equal to \(O_{ij}^B\). We associate the observables \(O_{ij}^{A}\) and \(O_{ij}^B\) with the cells in Alice and Bob's respective magic rectangle.
\begin{observation}
    The set of nontrivial \PQSS of a \QMRG, if it exists, is invariant under the same arbitrary permutations of rows and columns applied simultaneously to Alice's and Bob's respective magic rectangle, and is also invariant under 
    simultaneously rotation by \(90^\circ\), in the same direction followed by a simultaneously change of sign of operators in the same column to ensure consistency with the predicate. 
    \label{ob:perm_rota_soln}
\end{observation}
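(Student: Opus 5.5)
The plan is to prove both invariances by the same device. Each transformation is realised as a relabelling of questions and answers that carries the old game's predicate onto the new game's predicate. Such a relabelling turns any perfect operator setup into a perfect operator setup for the transformed game, acting on the same state \(\ket{\Psi}\). It therefore maps the set of \PQSS into itself, and applying the inverse transformation gives equality.

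\emph{Permutations.} Let \(\sigma\in S_m\) permute rows and \(\tau\in S_n\) permute columns. Define the new setup by
\[
E'_{a|x}=E_{\tau^{*}a\,|\,\sigma^{-1}(x)},\qquad F'_{b|y}=F_{\sigma^{*}b\,|\,\tau^{-1}(y)},
\]
where \(\tau^{*}a\) reorders the entries of the tuple \(a\) by \(\tau\), and \(\sigma^{*}b\) reorders \(b\) by \(\sigma\). These are again PVMs on \(\mathcal H_A\) and \(\mathcal H_B\). The row product \(\prod_k a_k\) and the column product \(\prod_{k'}b_{k'}\) are invariant under reordering, so \(S_A\) and \(S_B\) are preserved. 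The consistency condition \(a_y=b_x\) for the new question pair corresponds exactly to \(a_{\tau^{-1}(y)}=b_{\sigma^{-1}(x)}\) for the old pair. Hence \(P'_{x\Delta}=P_{\sigma^{-1}(x)\Delta}\) and \(Q'_{y\Delta}=Q_{\tau^{-1}(y)\Delta}\). The intersection over all \((x,y)\) in the canonical space \eqref{eq:psi_space} is merely re-indexed, so \(\ket{\Psi}\) lies in the new canonical space iff it lies in the old one. In observable language this is \(O'^{A}_{ij}=O^{A}_{\sigma^{-1}(i)\tau^{-1}(j)}\), and likewise for \(B\), which visibly preserves row and column commutativity and the product constraints.

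\emph{Rotation.} A \(90^\circ\) rotation sends an \(m\times n\) rectangle to an \(n\times m\) one and exchanges rows with columns. Alice's commuting rows with product \(+\mathbb I\) therefore become commuting columns, and Bob's columns with product \(-\mathbb I\) become rows. I would let the players exchange roles, so that the new row-player holds the rotated \(O^B\) and the new column-player the rotated \(O^A\); the required joint-measurability (commutativity) pattern is then correct. What remains is to repair the signs: the new rows have product \(-\mathbb I\) and the new columns \(+\mathbb I\). To fix this, I will flip the sign of the observables in a chosen set \(F\) of cells, in \emph{both} rectangles simultaneously. At the level of PVMs, flipping cell \((i,j)\) is the answer relabelling \(a_j\mapsto -a_j\) and \(b_i\mapsto -b_i\), which swaps \(P_{x+}\leftrightarrow P_{x-}\) and \(Q_{y+}\leftrightarrow Q_{y-}\) for that question pair. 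Because the flip is applied to both players, the condition \(a_y=b_x\) is preserved, each \(\bigoplus_\Delta \mathrm{Im}(P_{x\Delta}\otimes Q_{y\Delta})\) is unchanged, and so the canonical space is unchanged.

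The main obstacle is the choice of \(F\). Every row and every column of the rotated rectangle must contain an odd number of flipped cells, so that each row product and each column product changes sign. I would first try exactly what the statement says, namely flipping a single column, and then verify directly which products it corrects. If it turns out that a single column fixes the rows but alters only one column product, I would replace it by a general \(F\) with odd row and column counts, such as a transversal. I would then note the parity constraint \(|F|\equiv m\equiv n \pmod 2\) that this imposes, and check carefully whether it restricts the rotation statement to rectangles of equal parity, which includes the square case \(m=n\) of \QMSGs.
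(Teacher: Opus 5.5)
Nothing you commit to is wrong, but the step you leave open cannot be closed in the statement's favour, so it is a genuine gap.

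Your permutation argument is the paper's own argument (its proof is just ``relabel $x$ and $y$''), made precise. You also correctly reindex the answer tuples along with the questions, which the paper leaves implicit.

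For the rotation, the paper offers only ``exchange the labels $x$ with $y$''. That never confronts the fact that, after the exchange, the new row player's lines have product $-\mathbb{I}$ and the new column player's lines have product $+\mathbb{I}$. Your analysis of this is right, and the outcome you anticipated for the single column in the statement is what actually happens. In the rotated $n\times m$ rectangle a column meets every row once, so all row products are repaired. It changes only its own column product, and only if $n$ is odd, leaving the other $m-1$ columns at $+\mathbb{I}$. So the literal recipe fails for every $m\ge 2$, already for the $3\times 3$ square, where flipping the diagonal or all nine cells is what works.

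Your relabeling argument goes through with a flip set meeting every line an odd number of times. One example is a diagonal of length $\min(m,n)$ together with the cells of the first row (if $m>n$) or first column (if $n>m$) beyond its end. Such a set exists exactly when $m\equiv n \pmod 2$. Also, because the players exchange roles, the transformed state is the swapped vector in $\mathcal{H}_B\otimes\mathcal{H}_A$, not literally ``the same state $\ket{\Psi}$''.

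The question you leave open, whether the parity constraint genuinely restricts the statement, has a negative answer. For $m\not\equiv n\pmod 2$ the rotation claim is false, so no choice of relabeling can rescue it. A perfect deterministic strategy for the $m\times n$ game is a $\pm1$ filling with row products $+1$ and column products $-1$. Multiplying all entries gives $1=(-1)^n$, so such a filling exists iff $n$ is even. Hence, when the parities differ, exactly one of the $m\times n$ and $n\times m$ games has one. In that game every state is a \PQSS (take trivial measurements); in the other, no product state is.

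The smallest instance shows this starkly:
\begin{itemize}
\item Every state is a \PQSS of the $3\times 2$ game, using the filling with rows $(1,1),(1,1),(-1,-1)$.
\item The $2\times 3$ game has no \PQSS at all. Perfect winning gives $(O^A_{ij}\otimes\mathbb{I})\ket{\psi}=(\mathbb{I}\otimes O^B_{ij})\ket{\psi}$. The row constraint then gives $\ket{\psi}=(\mathbb{I}\otimes O^B_{i3}O^B_{i2}O^B_{i1})\ket{\psi}$ for $i=1,2$. Since $O^B_{2j}=-O^B_{1j}$, this forces $\ket{\psi}=-\ket{\psi}$.
\end{itemize}
So the observation as written contradicts the paper's own $2\times n$ ($n$ odd) impossibility result.

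The correct endpoint of your plan is therefore:
\begin{itemize}
\item permutation invariance holds in general;
\item rotation invariance holds exactly when $m\equiv n\pmod 2$, using an odd-line-count flip set rather than a single column;
\item otherwise there is a counterexample.
\end{itemize}
This is a defect in the statement that the paper's one-line proof passes over.
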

\begin{proof}
    Following Equation \eqref{eq:psi_space}, the permutation part follows from relabeling \(x\) and \(y\), and the rotation part follows from exchanging the labels \(x\) with \(y\). 
\end{proof}
As an observation, the setup of transpose game in ~\cite{Adamson_2020} can be partially captured within our framework.

\begin{lemma}
    In all \QMRGs, any \PQSS, \(\ket{\Psi}\), if exists, can always be written into 
    \begin{equation}
        \ket{\Psi} = \sum_{l \in L_{\Delta}; \forall \Delta \in \{1, -1\}} \alpha_l \ket{\phi_l \varphi_l},
        \label{eq:schmidt_form_PQSS}
    \end{equation}
    where \(P_{x\Delta} \ket{\phi_l} = \ket{\phi_l}\) and \(Q_{y\Delta} \ket{\varphi_l} = \ket{\varphi_l}\). Specifically, this representation holds for each of the \((m \times n)\) measurement configurations. 
    \label{lem:Schmidt_state}
\end{lemma}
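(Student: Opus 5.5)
Fix a measurement configuration $(x,y)$; the plan is to turn Equation \eqref{eq:perfect_stra_sum} into the decomposition \eqref{eq:schmidt_form_PQSS} by projecting $\ket{\Psi}$ onto the two blocks $\Delta=\pm1$ and then taking a Schmidt decomposition inside each block. The key fact is that the operators $R_\Delta := P_{x\Delta}\otimes Q_{y\Delta}$ are mutually orthogonal projectors. Indeed, $\{P_{x\Delta}\}_\Delta$ and $\{Q_{y\Delta}\}_\Delta$ are PVMs on $\mathcal{H}_A$ and $\mathcal{H}_B$, as noted after Equation \eqref{eq:perfect_stra_sum}. Hence $R_\Delta R_{\Delta'}=\delta_{\Delta\Delta'}R_\Delta$, and $R:=\sum_\Delta R_\Delta$ is itself a projector.

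First I would show that $R\ket{\Psi}=\ket{\Psi}$. Equation \eqref{eq:perfect_stra_sum} says $\mel{\Psi}{R}{\Psi}=1=\braket{\Psi}$. Since $R$ is a projector, $\|(\mathbb{I}-R)\ket{\Psi}\|^2=\braket{\Psi}-\mel{\Psi}{R}{\Psi}=0$, which is exactly the canonical-space statement \eqref{eq:psi_space} restricted to this $(x,y)$. Therefore $\ket{\Psi}=\sum_\Delta \ket{\Psi_\Delta}$ with $\ket{\Psi_\Delta}:=R_\Delta\ket{\Psi}\in \mathrm{Im}(P_{x\Delta})\otimes\mathrm{Im}(Q_{y\Delta})$, using that the image of a tensor product of projectors is the tensor product of their images.

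Next, I would apply the Schmidt decomposition to each $\ket{\Psi_\Delta}$ as a vector in the bipartite space $\mathrm{Im}(P_{x\Delta})\otimes\mathrm{Im}(Q_{y\Delta})$. This gives $\ket{\Psi_\Delta}=\sum_{l\in L_\Delta}\alpha_l\ket{\phi_l}\ket{\varphi_l}$, where $\{\ket{\phi_l}\}$ is orthonormal in $\mathrm{Im}(P_{x\Delta})$ and $\{\ket{\varphi_l}\}$ is orthonormal in $\mathrm{Im}(Q_{y\Delta})$. Consequently $P_{x\Delta}\ket{\phi_l}=\ket{\phi_l}$ and $Q_{y\Delta}\ket{\varphi_l}=\ket{\varphi_l}$ for every $l\in L_\Delta$. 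Summing over $\Delta\in\{1,-1\}$ with disjoint index sets $L_\Delta$ yields \eqref{eq:schmidt_form_PQSS}.

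As a remark, the vectors from different blocks are orthogonal, because $P_{x,1}P_{x,-1}=0$ and likewise for $Q$. The combined expansion is therefore a genuine Schmidt decomposition of $\ket{\Psi}$, and the $\alpha_l$ can be taken positive. Since $(x,y)$ was arbitrary, the representation holds in each of the $m\times n$ configurations, though the basis depends on $(x,y)$. The only delicate step is the first: deducing membership in the image from the expectation value equal to one. That step needs $\ket{\Psi}$ normalized and the $R_\Delta$ orthogonal. The orthogonality follows because $\Delta$ ranges over distinct values and $P_{x\Delta}$ is obtained by summing disjoint PVM elements $E_{a|x}$.
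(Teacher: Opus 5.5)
Your proof is correct and follows the same route as the paper. You use Equation \eqref{eq:perfect_stra_sum} to eliminate the sign-mismatched components, split $\ket{\Psi}$ into the two $\Delta$-blocks, and Schmidt-decompose each block. Your identity $\|(\mathbb{I}-R)\ket{\Psi}\|^2=0$ just makes rigorous the step the paper states as ``by Equation \eqref{eq:perfect_stra_sum}, $\Delta=\Delta'$,'' which it asserts after expanding $\ket{\Psi}$ in a product basis adapted to $P_{x\Delta}$ and $Q_{y\Delta'}$.
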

\begin{proof}
    W.l.o.g, let \(\ket{\Psi} = \sum_{i, j} \alpha_{ij} \ket{\phi_i \psi_j}\), where \(\alpha_{ij} \neq 0\) and for \(\forall \ket{\phi_i}, \ket{\psi_j}\), \(\exists P_{x\Delta}, Q_{y\Delta'}\) such that \(P_{x\Delta} \ket{\phi_i} = \ket{\phi_i}\) and \(Q_{y\Delta'} \ket{\psi_j} = \ket{\psi_j}\).
    By Equation \eqref{eq:perfect_stra_sum}, we know that \(\Delta = \Delta'\). We can partition the summation into several mutually orthogonal clusters, each indexed by \(\Delta\). We denote these clusters by \(\mathcal{N}^{-1}(\psi_{\Delta}) \ket{\psi_{\Delta}}\) and \(\mathcal{N}(\psi_{\Delta})\) is some normalization factor. 
    After rewriting these clusters into their Schmidt forms, we obtain Equation \eqref{eq:schmidt_form_PQSS}. Note that \(\phi_i \psi_j\) and \(\phi_l \psi_l\) are used here as generic placeholders and are not assumed to be equal.
\end{proof}

It is not a priori clear what structural constraints characterize all perfect quantum strategies for \QMRGs.
At a high level, our result shows that perfect quantum strategies are not arbitrary: they must conform to a specific algebraic combinatorical structural pattern relating the underlying state and measurements.
This structure can be described via an explicit set of combinatorical constraints relating the underlying state and the projectives.
The main theorem below makes this statement precise, giving a complete characterization of all perfect quantum strategies in these games.
For instance, even in the \(3 \times 3\) case, this perspective suggests that two pairs of maximally entangled two-qubit state are not structurally enforced.

We now introduce a general framework that subsumes constructions of all perfect solutions of arbitrary \(m \times n\) \QMRGs defined in this paper.  
We first consider the case of lowest dimension. When \(n \leq 2\), any row \(i\) can have at most two projectives \(E_1\) and \(E_2\) that fits within the scope of the predicate. 
When \(n \geq 3\), there are at most \(2^{n-1}\) projectives per row and at most \(2^{m-1}\) projectives per column. For instance, when \(n = 3\), \(P_{x (\Delta = 1)}\) of the first, second and third column are \(E_1 + E_2\), \(E_1 + E_3\) and \(E_1 + E_4\) up to swap of index. 
We denote the index space in \(i\)-th row associated with \(j\)-th column by \(\mathcal{I}^{(n)}_{\Delta, j, i}\). In the following analysis, we will omit \(i\) and just write \(\mathcal{I}^{(n)}_{\Delta, j}\) without ambiguity. In this example, \(\mathcal{I}^{(3)}_{1, 1} = \{1, 4\}\), \(\mathcal{I}^{(3)}_{1, 2} = \{1, 3\}\),
\(\mathcal{I}^{(3)}_{1, 3} = \{1, 2\}\). The index sets for \(n = 4\) can be constructed as follows. Firstly, we exchange the index set \(\mathcal{I}^{(3)}_{1, 1}\) with \(\mathcal{I}^{(3)}_{-1, 1}\). Everything else stays the same. That helps to generate projectives correspond to product of measurements equal to -1, which will be assigned to \(\Delta = -1\) in the fourth column. 
We denote the new index sets by \(\mathcal{I}^{(3)'}_{\Delta, j}\). In this example, \(\mathcal{I}^{(3)'}_{1, 1} = \{2, 3\}\) and \(\mathcal{I}^{(3)'}_{1, -1} = \{1, 4\}\). 
The fourth column index sets are therefore 
\(\mathcal{I}^{(4)}_{1, 4} = \mathcal{I}^{(3)}_{1, 3} \sqcup \mathcal{I}^{(3)}_{-1, 3}\) and
\(\mathcal{I}^{(4)}_{-1, 4} = \mathcal{I}^{(3)'}_{1, 3} \sqcup \mathcal{I}^{(3)'}_{-1, 3}\).
Index sets of other columns are \(\mathcal{I}^{(4)}_{\Delta, q \leq 3} = \mathcal{I}^{(3)}_{\Delta, q} \sqcup \mathcal{I}^{(3)'}_{\Delta, q}\).
Elements from \(\mathcal{I}^{(3)}\) and \(\mathcal{I}^{(3)'}\) are treated as distinct. 
As for arbitrary \(n \geq 3\), the index sets are 
\begin{equation}
    \begin{aligned}
        \mathcal{I}^{(n)}_{1, n} = \mathcal{I}^{(n-1)}_{1, n-1} \sqcup \mathcal{I}^{(n-1)}_{-1, n-1}, \\
        \mathcal{I}^{(n)}_{-1, n} = \mathcal{I}^{(n-1)'}_{1, n-1} \sqcup \mathcal{I}^{(n-1)'}_{-1, n-1}, \\
        \mathcal{I}^{(n)}_{\Delta, q \leq n-1} = \mathcal{I}^{(n-1)}_{\Delta, q} \sqcup \mathcal{I}^{(n-1)'}_{\Delta, q},
    \end{aligned}
\end{equation}
following similar construction illustrated above. Similar procedure also applies to construction of index space in any column. We denote the index space in \(k\)-th column of \(l\)-th row by \(\mathcal{J}^{(m)}_{\Delta, l, k}\).
Each index set generates a pool of projectives related to \(P\) and \(Q\) operators, of cell \((i, j)\), by \(P_{i \Delta} = \mathcal{G}_A(\mathcal{I}^{(n)}_{\Delta, j, i})\) and \(Q_{j \Delta} = \mathcal{G}_B(\mathcal{J}^{(m)}_{\Delta, i, j})\). 
Therefore, the \PQSS canonical space for arbitrary \QMRGs is 
\begin{equation}
\resizebox{\linewidth}{!}{$
    \ket{\psi} \in \bigcap_{\substack{i \in \{1, 2, ..., m\}, \\ j \in \{1, 2, ..., n\}}} \bigcup_{\Delta \in \{1, -1\}} \Im(\mathcal{G}_A(\mathcal{I}^{(n)}_{\Delta, j, i}) \otimes \mathcal{G}_B(\mathcal{J}^{(m)}_{\Delta, i, j})).
    \label{eq:psi_general}
$}
\end{equation}
An explicit computational form of the solution can in principle be derived, but we leave this for future work. Our main results do not rely on such explicit representation. 

The above analysis suggests the following general structure:
\begin{theorem}
    Any \PQSS, \(\ket{\psi}\), if exists, of an \(m \times n\) \QMRG always admits a form of 
    \begin{equation}
        \ket{\psi} = \sum_k \beta_k \ket{\psi_k},
    \end{equation}
    where \(\ket{\psi_k}\) are maximally entangled and mutually orthoginal \PQSS of some valid setup no larger than \(\ket{\psi}\)'s, and, the Schmidt coeffient of  \(\beta_k \ket{\psi_k}\) differs from the Schmidt coefficient of \(\beta_{k'} \ket{\psi_{k}}\) if and only if \(k \neq k'\). The coefficients, \(\beta_k\), are not zero.
    \label{th:PQSS_form}
\end{theorem}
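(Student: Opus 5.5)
The plan is to start from a Schmidt decomposition of $\ket{\psi}$ and group Schmidt vectors by equal Schmidt coefficients. Write
\[
\ket{\psi}=\sum_r \lambda_r \ket{u_r}\ket{v_r},\qquad \lambda_r>0 .
\]
Let $\beta_1>\beta_2>\dots$ be the distinct values among the $\lambda_r$. Set
\[
\ket{\psi_k}=d_k^{-1/2}\sum_{r:\lambda_r=\beta_k}\ket{u_r}\ket{v_r},
\]
where $d_k$ is the multiplicity of $\beta_k$. Then
\[
\ket{\psi}=\sum_k \beta_k\sqrt{d_k}\,\ket{\psi_k}.
\]
We absorb $\sqrt{d_k}$ into $\beta_k$; when doing so, the distinctness of coefficients should be read as distinctness of the Schmidt coefficient of $\beta_k\ket{\psi_k}$, which is $\beta_k$. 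By construction, each $\ket{\psi_k}$ is maximally entangled on its support $U_k\otimes V_k$, where $U_k=\mathrm{span}\{\ket{u_r}:\lambda_r=\beta_k\}$ and $V_k$ is defined likewise. The $\ket{\psi_k}$ are mutually orthogonal, the coefficients are nonzero, and different $k$ carry different Schmidt coefficients. This settles all the bookkeeping claims of Theorem~\ref{th:PQSS_form}. What remains is the substantive claim that each $\ket{\psi_k}$ is itself a \PQSS for some valid setup of dimension no larger than that of $\ket{\psi}$.

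For that claim I would use the standard fact that for perfect correlations the Schmidt spaces of a fixed coefficient are invariant under the relevant projectors. By Equation~\eqref{eq:perfect_stra_sum}, for each $(x,y)$,
\[
(P_{x\Delta}\otimes \mathbb{I})\ket{\psi}=(\mathbb{I}\otimes Q_{y\Delta})\ket{\psi}
\]
for every $\Delta$. To see this, note that $\sum_\Delta \langle P_{x\Delta}\otimes Q_{y\Delta}\rangle=1$ forces $\ket{\psi}$ to lie in $\bigoplus_\Delta \mathrm{Im}(P_{x\Delta}\otimes Q_{y\Delta})$. Hence $(P_{x\Delta}\otimes\mathbb{I})\ket{\psi}=(P_{x\Delta}\otimes Q_{y\Delta})\ket{\psi}=(\mathbb{I}\otimes Q_{y\Delta})\ket{\psi}$, which is the content behind Lemma~\ref{lem:Schmidt_state}.

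Next, identify $\ket{\psi}$ with the operator $K=\sum_r\lambda_r \ket{u_r}\overline{\bra{v_r}}$. The relation above becomes $P K = K Q^{T}$. Multiplying by its adjoint gives $P KK^\dagger = KQ^TK^\dagger = KK^\dagger P$, using that $P$ and $Q$ are Hermitian projectors. So every $P_{x\Delta}$ commutes with $\rho_A=KK^\dagger=\sum_r\lambda_r^2\ketbra{u_r}$, and therefore preserves each eigenspace $U_k$. The same argument shows that every $Q_{y\Delta}$ preserves $V_k$. Moreover, restricted to the $k$-th blocks, $P|_{U_k}K_k=K_kQ^T|_{V_k}$ with $K_k=\beta_k\,\mathrm{(unitary)}$.

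I would then push this invariance down to the original PVM elements $E_{a|x}$ and $F_{b|y}$, which requires the main step below. Once that is done, the compressions $E_{a|x}|_{U_k}$ and $F_{b|y}|_{V_k}$ form PVMs on $U_k$ and $V_k$. They still satisfy the row and column product constraints, because these are functional identities in commuting projectors that survive restriction to an invariant subspace. This gives a valid setup of dimension $\dim U_k\le\dim\mathcal{H}_A$. Finally, the $k$-th block of the relation $PK=KQ^T$ shows that $\ket{\psi_k}$ satisfies Equation~\eqref{eq:perfect_stra_sum} for all $(x,y)$, so it is a \PQSS of that setup.

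\textbf{Main obstacle.} I expect the main difficulty to be the passage from invariance under the coarse projectors $P_{x\Delta}$ to invariance under the finer $E_{a|x}$. Equation~\eqref{eq:perfect_stra_sum} only constrains the $P_{x\Delta}$. I would handle this by noting that $P_{x\Delta}$ is defined separately for every column $y$. Fixing $x$ and letting $y$ range over all $n$ columns, the products of the projectors $P_{x,\Delta_y}^{(y)}$ over columns generate exactly the individual $E_{a|x}$ with $a\in S_A$, since $a$ is determined by its entries $a_y$. Each factor commutes with $\rho_A$, so each $E_{a|x}$, $a\in S_A$, does as well. On the support of $\rho_A$, the outcomes with $a\notin S_A$ have zero weight and can be discarded, which also keeps the dimension no larger than that of $\ket{\psi}$. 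The symmetric argument over rows handles Bob's $F_{b|y}$.
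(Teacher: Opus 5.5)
Your proof is correct, and it reaches the theorem by a more explicit, operator-level route than the paper's.

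The paper stays inside its index formalism. From the row canonical form \(\ket{\psi}\in\bigcap_i\bigcup_r\Im(\mathcal{T}_{ri})\), it uses Lemma~\ref{lem:Schmidt_state} to pick, for each configuration, Schmidt modes lying in the images of the relevant projectors. It then invokes the freedom in Schmidt decompositions: different choices are related by local unitaries acting only within blocks of equal singular value. From this it concludes that each degenerate block \(\ket{\psi_k}\) lies in \(\sum_{r_k}\Im(\mathcal{T}_{r_k i})\) for every row \(i\). Finally, it builds the setup for \(\ket{\psi_k}\) combinatorially, by discarding column indices that do not recur across rows. Concretely, it intersects with \(\bigcap_i\bigcup_k\mathcal{J}^{(m)}_{\Delta^{(k)},i,j}\) to obtain the effective index sets \(C^{(m)}_{\Delta^{(k)},i,j}\).

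Your identity \(PK=KQ^{T}\), which gives \([P_{x\Delta},\rho_A]=0\) and \([Q_{y\Delta},\rho_B]=0\), is an equivalent but more direct statement of that block invariance. Your compression of all \(E_{a|x}\) and \(F_{b|y}\) to the eigenspaces \(U_k\), \(V_k\) is the operator counterpart of the paper's index elimination. It has two advantages. First, the PVM property, the row and column constraints, and perfect winning for \(\ket{\psi_k}\) are all verified directly. Second, it treats the fine projectors \(E_{a|x}\), through products of the \(P^{(y)}_{x\Delta}\) over columns, rather than only the coarse \(P_{x\Delta}\).

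The paper's formulation, in turn, makes ``no larger'' immediate in its own sense, and that is the one point you should adjust. In the paper, a setup is smaller when it arises by reducing some \(E\) or \(F\) operators to zero (eliminating indices), not when it acts on a lower-dimensional space. Your construction satisfies this as well. Compression to an invariant subspace preserves how the \(E_{a|x}\) are grouped into the \(P_{x\Delta}\) (and the \(F_{b|y}\) into the \(Q_{y\Delta}\)), and it can only send some of them to zero. So you should state that property rather than the bound \(\dim U_k\le\dim\mathcal{H}_A\).
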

\begin{proof}
    It is clear that \(\ket{\psi_k}\) corresponds to the only one singular value and is therefore maximally entangled. We start from row canonical form of Equation \eqref{eq:psi_general}. Let \(\mathcal{K}_r\) be the selection operator that generates \(r\)-th configuration from the set \(S_i = \{\Delta_{ij} : \forall j\}\). We denote \(T_{ri} = \mathcal{K}_r(S_i)\), where the elements in \(T_{ri}\) satisfy \(\prod_{j} \Delta_{ij} = 1\).
    \begin{equation}
    \resizebox{\linewidth}{!}{$
        \ket{\psi} \in \bigcap_i \bigcup_r \Im(\mathcal{G_A}(\bigcap_{\Delta_{ij} \in T_{ri}} \mathcal{I}^{(n)}_{\Delta, j, i})) \otimes \bigcap_{\Delta_{ij} \in T_{ri}} \Im(\mathcal{G}_B(\mathcal{J}_{\Delta, i, j}^{(m)})).
    $}
    \end{equation}
    We denote the above expression by \(\ket{\psi} \in \bigcap_i \bigcup_r \Im(\mathcal{T}_{ri})\). By Lemma \ref{lem:Schmidt_state}, each Schmidt mode of \(\ket{\psi}\) is always an eigenvector of some \(\mathcal{T}_{ri}\). The transformation of Schmidt form of \(\ket{\psi}\) across different rows is always governed by local unitaries, which can be decomposed into actions on subspaces corresponding to distinct singular values.
    Therefore, any state \(\ket{\psi_k}\) is an eigenvector of some \(\sum_{r_k} \mathcal{T}_{r_k i}\) where each \(\mathcal{T}_{r_k i} \in \{\mathcal{T}_{ri} : \forall r\}\). Correspondingly, we let \(\Delta^{(k)}_{ij} \in T_{r_k, i}\). 
    This is ture for any \(i\).
    Since the transformation is always invertible and components of transformed principle vectors have to be lying in 
    the direct sum of old subspaces. Therefore, state \(\ket{\psi_k}\) does not have any component corresponds to non-reoccuring column indexes. We eliminate the non-reoccuring ones by grouping up column indexes along a row and then taking the intersection, that is, \(\bigcap_i \bigcup_k \mathcal{J}_{\Delta^{(k)}, i, j}^{(m)}\). 
    We then take the intersection of it with each \(\mathcal{J}^{(m)}_{\Delta^{(k)}, i, j}\) to create the effective column indexes denoted by \(C_{\Delta^{(k)}, i, j}^{(m)}\). 
    Therefore, state \(\ket{\psi_k}\) always corresponds to an operator setup of 
    \begin{equation}
        \ket{\psi_k} \in \bigcap_{i, j} \bigcup_{\Delta^{(k)}} \Im(\mathcal{G}_A (\mathcal{I}^{(n)}_{\Delta^{(k)}, j, i}) \otimes \mathcal{G}_B(C_{\Delta^{(k)}, i, j}^{(m)})).
    \end{equation}
    The elimination of indexes is equivalent to reduction of the corresponding \(F\) operations. Therefore, this operator setup is no larger than than the maximal one and state \(\ket{\psi_k}\) is also a \PQSS of the game. Therefore, the operator setup is also valid. 
\end{proof}

We notice that it is always possible to write down smaller setups by letting some of the E and F operators to be 0. However, this type of reduction cannot be carried out indefinitely. It has to satisfy the restrictions that \(O_{ij}^A\) and \(O_{ij}^B\) cannot be zero, \(P_{x\Delta}\) and \(Q_{y\Delta}\) are either both zero or both not zero for any \(x\), \(y\) and \(\Delta\), 
and, projectives in a row or column must support all observables in that row or column. 
We define the minimal case - operator setup that the reduction of any \(E\) or \(F\) to 0 leads to an impossible setup. We call an operator setup to be smaller than the other if and only if the former is related to the later by some reduction of \(E\) or \(F\) operators. 
We define an operator setup to be larger than the other one in the similar way. 
We note that we are referring to elimination of indexes in Expression ~\eqref{eq:psi_general} through reduction of projectives, not just bringing specific numerical values to 0. 
Two setups are equal if and only if their arrangement of indexes are the same. 
The construction in Expression \eqref{eq:psi_general} generates the maximal case if none of the E and F operators are 0. 
However, not every operator setup admits a perfect solution. An operator setup that does is called a valid operator setup.
Throught this paper, when a \PQSS has an operator setup, we mean the minimal valid setup for this state. 
We leave it for future work to determine whether this minimal valid setup is unique or not, but it is not required for our main results.

Having established the framework of any perfect solution of arbitrary \QMRGs, we now turn to te following observation. This is technically straightforward, but conceptually useful because it makes explicit that it is possible for Alice's observables to be different from Bob's observables.
\begin{lemma}
    Given that \(\ket{\psi}\) is a \PQSS of a full table of observables \(O_{ij}^A\) and \(O_{ij}^B\), applying local unitaries \(U_A\) and \(U_B\) to both the state and the observables generates a perfect solution. 
\end{lemma}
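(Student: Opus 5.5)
The plan is a direct conjugation argument. Set the transformed state to \(\ket{\psi'} = (U_A \otimes U_B)\ket{\psi}\). The transformed observables are \(O'^A_{ij} = U_A O^A_{ij} U_A^\dagger\) and \(O'^B_{ij} = U_B O^B_{ij} U_B^\dagger\). Equivalently, at the level of projectors, \(E'_{a|x} = U_A E_{a|x} U_A^\dagger\) and \(F'_{b|y} = U_B F_{b|y} U_B^\dagger\). I would check two things: that the primed operators still form a legitimate operator setup for the \QMRG, and that the correlations \(p(a,b|x,y)\) are unchanged. Together these give \(\omega_q = 1\) for the primed strategy.

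For the first point, conjugation by a unitary preserves everything needed.

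\begin{itemize}
\item \emph{PVMs.} Each family \(\{E'_{a|x}\}_a\) remains a PVM, since \(U_A E U_A^\dagger\) is a projector and \(\sum_a U_A E_{a|x} U_A^\dagger = U_A \mathbb{I}_A U_A^\dagger = \mathbb{I}_A\). The same holds for Bob's family.
\item \emph{Commutativity.} Row-wise commutativity is preserved because \([U O U^\dagger, U O' U^\dagger] = U[O,O']U^\dagger = 0\). Column-wise commutativity for Bob follows identically.
\item \emph{Product constraints.} The row constraint is preserved because the product telescopes, \(\prod_j U_A O^A_{ij} U_A^\dagger = U_A (\prod_j O^A_{ij}) U_A^\dagger = \mathbb{I}_A\). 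The column constraint gives \(-\mathbb{I}_B\) in the same way.
\item \emph{Coarse-grained projectors.} Since \(P_{x\Delta}\) and \(Q_{y\Delta}\) are sums of the \(E\) and \(F\) projectors, they transform the same way: \(P'_{x\Delta} = U_A P_{x\Delta} U_A^\dagger\) and \(Q'_{y\Delta} = U_B Q_{y\Delta} U_B^\dagger\).
\end{itemize}

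For the second point, I would compute directly:
\[
\mel{\psi'}{E'_{a|x}\otimes F'_{b|y}}{\psi'} = \mel{\psi}{(U_A^\dagger U_A E_{a|x} U_A^\dagger U_A)\otimes(U_B^\dagger U_B F_{b|y} U_B^\dagger U_B)}{\psi}.
\]
This reduces to \(p(a,b|x,y)\). Therefore Equation \eqref{eq:perfect_stra_sum} holds for every \((x,y)\), and \(\ket{\psi'}\) is a \PQSS. Equivalently, the canonical space \eqref{eq:psi_space} for the primed setup is \((U_A\otimes U_B)\) applied to the old one, because \(\Im(U P U^\dagger \otimes V Q V^\dagger) = (U\otimes V)\Im(P\otimes Q)\) and intersections and direct sums commute with an invertible map.

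I do not expect a genuine obstacle here. The only point needing care is interpretive: \(U_A\) and \(U_B\) are independent. This is the conceptual content the paper wants, namely that \(O'^A_{ij}\) and \(O'^B_{ij}\) may differ even if \(O^A_{ij} = O^B_{ij}\) initially. I would note this explicitly, and also note that the Schmidt structure of Lemma \ref{lem:Schmidt_state} is carried along, since \(\ket{\phi_l}\mapsto U_A\ket{\phi_l}\) and \(\ket{\varphi_l}\mapsto U_B\ket{\varphi_l}\) preserve orthonormality and the eigen-relations with the transformed \(P\) and \(Q\).
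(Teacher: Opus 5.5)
Your proposal is correct and takes the same route as the paper, which simply notes that \(p(a,b|x,y)\) is unchanged under the joint replacement \(\ket{\psi}\mapsto (U_A\otimes U_B)\ket{\psi}\), \(E_{a|x}\mapsto U_A E_{a|x}U_A^\dagger\), \(F_{b|y}\mapsto U_B F_{b|y}U_B^\dagger\). Your extra checks (the PVM property, commutativity, and the row and column product constraints) are the routine details the paper leaves implicit.
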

\begin{proof}
    This can be seen directly from the expression of \(p(a, b|x, y)\). 
\end{proof}
This leads to the following observation.
\begin{observation}
    It is possible that \(O_{ij}^A \neq O_{ij}^B\).
\end{observation}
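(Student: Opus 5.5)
The observation claims only that there is \emph{some} perfect strategy, for \emph{some} \QMRG, in which Alice's and Bob's observables for a cell differ. So the plan is constructive: exhibit one perfect strategy with \(O_{ij}^A = O_{ij}^B\), then transport it by a local unitary using the preceding lemma. The natural starting point is the \(3\times 3\) case, where \QMRGs reduce to \QMSGs. There we take the standard Mermin--Peres table of two-qubit Pauli observables on \(\mathcal{H}_A \cong \mathcal{H}_B \cong \mathbb{C}^4\), with a sign choice that makes every row product \(+\mathbb{I}\) and every column product \(-\mathbb{I}\). The shared state is two copies of a maximally entangled pair, \(\ket{\Psi} = \ket{\Phi^+}^{\otimes 2}\). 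Alice measures the transposed operators \((O_{ij})^T\) and Bob measures \(O_{ij}\), or equivalently we use the identity \((A\otimes \mathbb{I})\ket{\Phi^+}=(\mathbb{I}\otimes A^T)\ket{\Phi^+}\). The Paulis in the table are real or purely imaginary, so transposition at most flips signs; we fix the table so that, after transposition, the sign conventions remain compatible. This gives a perfect strategy, i.e.\ \(\ket{\Psi}\) is a \PQSS in the sense of Equation \eqref{eq:perfect_stra_sum}, and it satisfies \(O_{ij}^A = O_{ij}^B\) up to this transpose convention.

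Next we apply the preceding lemma with \(U_A = U\), \(U_B = \mathbb{I}\), for a unitary \(U\) chosen not to commute with some \(O_{ij}\). For example, \(U = H\otimes \mathbb{I}\) (Hadamard on the first qubit) does not commute with an observable such as \(X\otimes\mathbb{I}\) appearing in the table. The new observables are \(\tilde O_{ij}^A = U O_{ij}^A U^\dagger\) and \(\tilde O_{ij}^B = O_{ij}^B\), and the new state is \((U\otimes\mathbb{I})\ket{\Psi}\). Conjugation by \(U\) preserves commutativity and products, so row-wise commutativity and the row constraint \(\prod_j \tilde O^A_{ij} = \mathbb{I}_A\) still hold. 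The correlations \(p(a,b|x,y)\) are unchanged, since \(\tilde E_{a|x} = U E_{a|x} U^\dagger\). Hence the strategy remains perfect. Finally, we check that \(U O_{ij}^A U^\dagger \neq O_{ij}^B\) for the chosen cell; with \(U = H\otimes\mathbb{I}\) this is \(Z\otimes\mathbb{I}\neq X\otimes\mathbb{I}\).

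The only step requiring care is the sign bookkeeping in the base strategy. We must verify that the Mermin--Peres table with the transpose convention really yields \(a_y=b_x\) with probability one for all nine question pairs. If the transpose causes a sign mismatch (the \(Y\otimes Y\)-type cell), the cleanest fix is to note that the mismatch is itself an instance of the observation. Alternatively, Alice can use \(O_{ij}^T\) as a table distinct from Bob's, which already proves \(O^A\neq O^B\) directly. Either route establishes the observation, and the unitary-conjugation argument extends it to a continuous family of such strategies.
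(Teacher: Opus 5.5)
Your proposal is correct and is essentially the paper's argument: take the Mermin--Peres square and conjugate one party's observables and the state by a local unitary; you use $U_A=H\otimes\mathbb{I}$ where the paper uses $U_B$, which makes no difference by symmetry. Your requirement that $U$ fail to commute with some $O_{ij}$ properly sharpens the paper's condition $U_B\neq\mathbb{I}_B$ (e.g.\ $U_B=-\mathbb{I}_B$ changes no observable), and your sign-bookkeeping worry is moot, since transposition preserves the products of a commuting family and every entry of the standard table, $Y\otimes Y$ included, is a symmetric matrix.
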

\begin{proof}
    This can already be seen in the \(3 \times 3\) case. An easy example can be constructed by applying \(\mathbb{I}_A \otimes U_B\) to the Mermin-Peres Magic Square, where \(U_B \neq \mathbb{I}_B\).
\end{proof}

We apply this framework to analyze the \(3 \times 3\) \QMSG. In particular, we show that all setups smaller than the maximal setup are invalid. Therefore, all \PQSS for this game correspond to the maximal setup and we obtain a lower bound of the Schmidt rank. 
However, we note that \PQSS for this game is not forced to be restricted to two pairs of maximally entangled two-qubit state. 
For a unified treatement, we associate the set of projectives \(\mathcal{R}_i = \{E_1, E_2, ...\}\), \(\mathcal{R}_{i'} = \{E_a, E_b, ...\}\) and \(\mathcal{R}_{i''} = \{E_{\alpha}, E_{\beta}, ...\}\) with the \(i\)-th, \(i'\)-th, and \(i''\)-th rows, respectively, where \(i\), \(i'\) and \(i''\) are pairwise distinct. 
Analogously, we associate the operator sets \(\mathcal{C}_j = \{F_1, F_2, ...\}\), \(\mathcal{C}_{j'} = \{F_a, F_b, ...\}\) and \(\mathcal{C}_{j''} = \{F_{\alpha}, F_{\beta}, ...\}\) with the column cells, assuming a similar mutual exclusivity for the indices \(j\), \(j'\) and \(j''\). 
The cardinalities of all such sets are bounded by 4. It can be verified that the minimal setup, although invalid which we will prove later, is 
\begin{equation}
\begin{split}
    \ket{\psi} \in &\text{Im}(E_1 \otimes F_1 + E_2 \otimes F_2) \\ &\cap \text{Im}(E_1 \otimes F_a + E_2 \otimes F_b) \\ &\cap \text{Im}(E_a \otimes F_2 + E_b \otimes F_1) \\ &\cap \text{Im}(E_a \otimes F_a + E_b \otimes F_b).
    \label{eq:psi_minimal_33}
\end{split}
\end{equation}
Other smaller setups that cause at least one of the observables to be \(\pm \mathbb{I}\) is
\begin{equation}
    \begin{aligned}
        \ket{\psi} \in & \Im(E_1 \otimes (F_1 + F_2) + E_2 \otimes (F_3 + F_4)) \\ &\cap \Im(E_1 \otimes (F_{\alpha} + F_{\beta}) + E_2 \otimes (F_{\epsilon} + F_{\zeta})) \\
        & \cap \Im(E_{\alpha} \otimes (F_2 + F_4) + E_{\beta} \otimes (F_1 + F_3)) \\ &\cap \Im(E_{\alpha} \otimes (F_{\alpha} + F_{\epsilon}) + E_{\beta} \otimes (F_{\beta} + F_{\zeta})) \\
        & \cap \Im(E_a \otimes (F_1 + F_4) + E_b \otimes (F_2 + F_3)) \\ &\cap \Im(E_a \otimes (F_{\alpha} + F_{\zeta}) + E_b \otimes (F_{\beta} + F_{\epsilon})),
    \end{aligned}
    \label{eq:EC_3I_one_row}
\end{equation}

\begin{equation}
\begin{aligned}
    \ket{\psi} \in & \Im(E_1 \otimes F_1 + E_2 \otimes F_2) \\ &\cap \Im(E_1 \otimes F_a + E_2 \otimes F_b) \\
    & \cap \Im(E_a \otimes F_b + E_b \otimes F_a) \\ &\cap \Im(E_a \otimes F_{\alpha} + E_b \otimes F_{\beta}) \\
    & \cap \Im(E_{\alpha} \otimes F_1 + E_{\beta} \otimes F_2) \\ &\cap \Im(E_{\alpha} \otimes F_{\alpha} + E_{\beta} \otimes F_{\beta}),
\end{aligned}
\label{eq:EC_3I_diff_col_row}
\end{equation}

\begin{equation}
\resizebox{\linewidth}{!}{$
\begin{aligned}
    \ket{\psi} \in & \Im(E_1 \otimes F_1 + E_2 \otimes F_2) \\ &\cap \Im(E_1 \otimes (F_a + F_b) + E_2 \otimes (F_c + F_d)) \\ 
    & \cap \Im(E_a \otimes (F_a + F_c) + E_b \otimes (F_b + F_d)) \\ &\cap \Im(E_a \otimes F_{\alpha} + E_b \otimes F_{\beta}) \\
    & \cap \Im((E_{\alpha} + E_{\beta}) \otimes F_1 + (E_{\epsilon} + E_{\zeta}) \otimes F_2) \\ &\cap \Im((E_{\alpha}+E_{\zeta}) \otimes F_{\alpha} + (E_{\beta}+E_{\epsilon}) \otimes F_{\beta}) \\
    & \cap \Im((E_{\alpha} + E_{\epsilon}) \otimes (F_b + F_c) + (E_{\beta} + E_{\zeta}) \otimes (F_a + F_d)),
\end{aligned}
$}
\label{eq:EC_2I_diff_col_row}
\end{equation}
and
\begin{equation}
\resizebox{\linewidth}{!}{$
\begin{aligned}
    \ket{\psi} \in & \Im(E_1 \otimes (F_a + F_b) + E_2 \otimes (F_c + F_d)) \\ &\cap \Im(E_1 \otimes (F_{\alpha} + F_{\beta}) + E_2 \otimes (F_{\epsilon} + F_{\zeta})) \\
    & \cap \Im((E_a + E_b) \otimes F_2 + (E_c + E_d) \otimes F_1) \\ &\cap \Im((E_{\epsilon} + E_{\zeta}) \otimes F_2 + (E_{\alpha} + E_{\beta}) \otimes F_1) \\
    & \cap \Im((E_b + E_d) \otimes (F_a + F_c) + (E_a + E_c) \otimes (F_b + F_d))  \\
    & \cap \Im((E_b + E_c) \otimes (F_{\alpha} + F_{\epsilon}) + (E_a + E_d) \otimes (F_{\beta} + F_{\zeta})) \\
    & \cap \Im((E_{\beta} + E_{\zeta}) \otimes (F_a + F_d) + (E_\alpha + E_\epsilon) \otimes (F_b + F_c)) \\
    & \cap \Im((E_{\beta} + E_{\epsilon}) \otimes (F_{\alpha} + F_\zeta) + (E_\alpha + E_\zeta) \otimes (F_\beta + F_\epsilon)).
\end{aligned}
$}
\label{eq:EC_1I}
\end{equation}
The rest includes having only three types of projectives in a row or column. We begin with the analysis of the smallest case. 
The following type of elementary reformulation will be used throughout. 
Any \(\ket{\psi}\) that satisfies Expression \eqref{eq:psi_minimal_33} also satisfies
\begin{equation}
\resizebox{\linewidth}{!}{$
    \ket{\psi} \in \Im(E_1 \otimes F_{1|a} + E_2 \otimes F_{2|b}) \cap \Im(E_a \otimes F_{2|a} + E_b \otimes F_{1|b}),
$}
\end{equation}
where we rewrote \(\Im(F_1) \cap \Im(F_a)\) as \(\Im(F_{1|a})\) and similarly for other operators. This constraint implies that 
\(\Im(F_{1|a} + F_{2|b}) \cap \Im(F_{1|b} + F_{2|a}) \neq \{0\}.\) The intersection is denoted by \(\mathcal{H}_s\). We will repeately use the parity check method in the following discussion. Firstly, we define the parity operators: \(M = P_1 - P_2\), \(N = P_a - P_b\) that acts on \(\mathcal{H}_s\). On the left hand side, \(M\) and \(N\) each produces sign patterns \((1, -1)\) and \((1, -1)\), whereas on the right hand side, the sign patterns are \((1, -1)\) and \((-1, 1)\) respectively. 
Clearly, for any \(\ket{k} \in \mathcal{H}_s\), \(MN\) acts on \(\ket{k}\) is not well-defined, which produces a contradiction. Therefore, \(\mathcal{H}_s\) is \(\{0\}\) and the minimal setup is invalid.

Having established the result for the minimal case, we now proceed to other settings. While the derivation becomes more involved, the structure of our construction suggests a possible alternative approach via reduction. We present a direct proof here for completeness and transparency, and leave a systematic development of the reduction approach to future work.
In the case of Expression \eqref{eq:EC_3I_one_row}, the \PQSS must satisfy
\begin{equation}
\begin{split}
    \ket{\psi} \in &\Im(E_1 \otimes F_{12 | \alpha \beta} + E_2 \otimes F_{34 | \epsilon \zeta}) \\
    &\cap \Im(E_{\alpha} \otimes F_{24 | \alpha \epsilon} + E_\beta \otimes F_{13 | \beta \zeta}) \\
    &\cap \Im(E_a \otimes F_{14 | \alpha \zeta} + E_b \otimes F_{23 | \beta \epsilon}),
\end{split}
\end{equation}
similarly, \(\Im(F_{12 | \alpha \beta}) = \Im(F_1 + F_2) \cap \Im(F_\alpha + F_\beta)\) and other symbols follow this type of definition as well. The existence of non-trivial \PQSS implies that \(\mathcal{H}_s = \Im(F_{12 | \alpha \beta} + F_{34 | \epsilon \zeta}) \cap \Im(F_{24 | \alpha \epsilon} + F_{13 | \beta \zeta}) \cap \Im(F_{14 | \alpha \zeta} + F_{23 | \beta \epsilon}) \neq \{0\}\). We refer to this representation as the row canonical form. 
A similar formulation can be given for column canonical form. Without loss of generality, we focus on row canonical form in this paper. 
The parity operators are defined as \(M_1 = P_1 + P_2 - P_3 - P_4 = P_{12} - P_{34}\), \(M_2 = P_{24} - P_{13}\), \(M_3 = P_{14} - P_{23}\), \(N_1 = P_{\alpha \beta} - P_{\epsilon \zeta}\), \(N_2 = P_{\alpha \epsilon} - P_{\beta \zeta}\) and \(N_3 = P_{\alpha \zeta} - P_{\beta \epsilon}\). Since \(M_i|_{\mathcal{H}_s} = N_i|_{\mathcal{H}_s}\) for any \(i\), it is obvious that \(M_3M_2M_1 = -P_{1234}\) and \(N_3N_2N_1 = P_{\alpha \beta \epsilon \zeta}\), which restricted to \(\mathcal{H}_s\) cannot be equal to each other unless \(\mathcal{H}_s\) is \(\{0\}\).
Therefore, we arrived at a contradiction. 
Row canonical form of Expression \eqref{eq:EC_3I_diff_col_row} implies that \(\Im(F_{1|a} + F_{2|b}) \cap \Im(F_{b|\alpha} + F_{a | \beta}) \cap \Im(F_{1 | \alpha} + F_{2 | \beta}) \neq \{0\} \) which is not working again by looking at \(\Im(F_{1|a}) \cap \Im(F_{1|\alpha}) \cap \Im(F_{b|\alpha} + F_{a|\beta}) = \{0\}\) and similarly for intersection involves \(\Im(F_{2})\). 
On the other hand, following from row canonical form of Expression \eqref{eq:EC_2I_diff_col_row}, \(\Im(F_{1|ab} + F_{2|cd}) \cap \Im(F_{ac|\alpha} + F_{bd|\beta}) \cap \Im(F_{1|\alpha|bc} + F_{1|\beta|ad} + F_{2|\beta|ad} + F_{2|\alpha|ad}) \neq \{0\}\) which is, again, a no-go direction similarly by looking at the terms associated to \(F_1\) and \(F_2\). The proof that Expression \eqref{eq:EC_1I} is a no-go direction as well follows the same techniques as in the previous cases. We omit the details. 
Therefore, the naive replacement of observables with identity operators is invalid. This implies that all observables used in \(3 \times 3\) must have two eigenvalues. We define this property as non-redundant since replacing any set of observables with identities or their negative results in an invalid setup.

Since a row of observables consisting of no identities admits a decomposition into \(2^{m-2} + 1\) to \(2^{m-1}\) \(E\) operators, the remaining question is how many valid operator setup exist for the \(3 \times 3\) \QMSG. 
We prove that the only valid operator setup is the maximal setup. However, we note that this is generally not true for larger \(m \times n\) \QMRGs. 
For instance, in the \(3 \times n\) setting with \(n > 3\), although it is always possible to construct a perfect quantum strategy by filling in the first \(3 \times 3\) cells with Mermin-Peres quantum magic square and the rest by \(\pm \mathbb{I}\),
the work in ~\cite{PhysRevA.105.032456} gave a different setup for the perfect solution. 
Back to our discussion of the \(3 \times 3\) setting, we begin with analysis for the largest operator setup with only \(F_4\) reduced. Following the same procudure as before, existence of non-trivial \PQSS implies that
\begin{equation}
    \begin{split}
    &\Im(F_{3|cd|\epsilon \zeta} + F_{3 | ab | \alpha \beta} + F_{12 | cd | \alpha \beta} + F_{12 | ab | \epsilon \zeta}) \\ &\cap \Im(F_{2 | bd | \beta \zeta} + F_{2 | ac | \alpha \epsilon} + F_{13 | bd | \alpha \epsilon} + F_{13 | ac | \beta \zeta}) \\
    &\cap \Im(F_{23 | bc | \beta \epsilon} + F_{23 | ad | \alpha \zeta} + F_{1 | bc | \alpha \zeta} + F_{1 | ad | \beta \epsilon})
    \end{split}
\end{equation}
is not \(\{0\}\) up to variation in Observation \ref{ob:perm_rota_soln}. Let us show that this is a no-go direction as well. 
Any vector \(\mathbf{v}\) lying in the intersection can be decomposed as \(\mathbf{v} = \mathbf{v}_1 + \mathbf{v}_2 + \mathbf{v}_3\) such that \(F_i \mathbf{v}_i = \mathbf{v}_i\). Therefore, \(\Im(F_{3|cd|\epsilon \zeta} + F_{3 | ab | \alpha \beta}) \cap \Im(F_{13 | bd | \alpha \epsilon} + F_{13 | ac | \beta \zeta}) \cap \Im(F_{23 | bc | \beta \epsilon} + F_{23 | ad | \alpha \zeta}) \neq \{0\}\) and similarly for other \(i\). 
However, these are not true by looking at the last four indexes and using argument in the simplest case. Therefore, we have reached a contradiction. Similarly, reducing other rows or columns creates more pairwise intersection of images. Using the same procedure, it can be verified that the contradiction persists. We refer to the maximal setup in the \(3 \times 3\) \QMSG as the minimal valid setup of this game, since no setup smaller than it can produce a \PQSS. 
It is also the only valid setup of this game. This leads to the following corollary.

\begin{corollary}
    \(\lparen 3 \times 3 \ \QMSG \rparen\) Any state \(\ket{\psi_k}\) always corresponds to the maximal setup. The minimal Schmidt rank of any \(\ket{\psi}_k\) is 4. 
\end{corollary}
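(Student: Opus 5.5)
The corollary has two parts, and I will prove them in order.

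\emph{First part: every $\ket{\psi_k}$ corresponds to the maximal setup.} I would start from Theorem~\ref{th:PQSS_form}. Each component $\ket{\psi_k}$ of a \PQSS is itself a maximally entangled \PQSS of some valid operator setup no larger than that of $\ket{\psi}$. The preceding analysis established two facts. Every setup smaller than the maximal one is invalid, which covers the minimal setup \eqref{eq:psi_minimal_33}, the identity-type reductions \eqref{eq:EC_3I_one_row}--\eqref{eq:EC_1I}, and the three-projective reductions such as removing $F_4$. These cases exhaust all reductions up to the symmetries of Observation~\ref{ob:perm_rota_soln}, so the maximal setup is the unique valid setup of the $3\times 3$ \QMSG. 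Since $\ket{\psi_k}$ must have a valid setup, that setup is the maximal one.

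\emph{Second part: the Schmidt rank of each $\ket{\psi_k}$ is at least $4$.} By Lemma~\ref{lem:Schmidt_state}, the Schmidt modes of $\ket{\psi_k}$ split into clusters, each supported in some $\Im(P_{x\Delta}\otimes Q_{y\Delta})$. I would argue as follows.

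\begin{enumerate}[label=(\roman*)]
\item Rank $1$ is impossible. A product eigenvector would give a deterministic assignment of $\pm1$ values to all cells, and the row and column parity constraints then multiply to $1=-1$.
\item Rank $2$ or $3$ is impossible. Restricted to the support of $\rho_A$, each Alice row contributes $\dim\le 3$ to the support, so some row has at most $2$ nonzero projectors among its $E$'s, or some $E$ acts as $\pm\mathbb{I}$ there. Compressing the operators to the Schmidt supports (a legitimate local reduction, since $\ket{\psi_k}$ lives there) then yields a smaller setup, which contradicts the first part.
\item Hence each row needs all four projectors $E_1,\dots,E_4$ to be nonzero on the support of $\rho_A$, so $\dim \operatorname{supp}\rho_A\ge 4$ and the Schmidt rank is at least $4$.
\item The bound is attained by the Mermin--Peres strategy with two Bell pairs, whose Schmidt rank is exactly $4$.
\end{enumerate}

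\emph{Expected main obstacle.} The hardest step is making rigorous that compressing to the Schmidt supports turns ``fewer than four nonzero $E$'s on the support'' into a genuinely smaller setup in the paper's sense. One must check that the compressed $E$ and $F$ remain projectors. This holds because $\ket{\psi_k}$ is maximally entangled, so the relation $(E\otimes \mathbb{I})\ket{\psi_k}=(\mathbb{I}\otimes \tilde E^{T})\ket{\psi_k}$ makes the supports invariant under the relevant projectors. I would prove this invariance first, using the standard perfect-correlation argument that $P_{x\Delta}\otimes \mathbb{I}$ and $\mathbb{I}\otimes Q_{y\Delta}$ act identically on the state.
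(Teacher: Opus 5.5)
Your proposal is correct and takes the paper's route. The first claim is Theorem~\ref{th:PQSS_form} combined with the preceding case analysis that the maximal setup is the only valid one; your compression to the Schmidt supports makes explicit the step from ``maximal setup'' to Schmidt rank $\ge 4$, with Mermin--Peres attaining the bound, which the paper leaves implicit. Two small clean-ups: from $\dim\operatorname{supp}\rho_A\le 3$ what follows is that every row has at most three nonzero $E$'s on the support (not that some row has at most two), which already gives a smaller setup; and the support invariance follows from $E_{a|x}=\prod_y P^{(y)}_{x,a_y}$ plus the perfect-correlation relation for each $P^{(y)}_{x\Delta}$, with no use of the maximal entanglement of $\ket{\psi_k}$.
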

\begin{proof}
    This is an immediate consequence of Theorem \ref{th:PQSS_form}.
\end{proof}

Having established the analysis for the \(3 \times 3\) \QMSG, now we turn to the general case of \QMRGs. 
Previous work in ~\cite{Adamson_2020}, partially overlaps with our definition of the game. In particular, they showed that for \(m = 1\) or \(n = 1\), quantum strategies do not outperform classical strategies, which coincides with our result in this setting.
Their results derived via NPA Level 1+AB hierachy also imply that there does not exist a perfect quantum strategy for \(2 \times n\) \QMRGs, where \(n\) is odd. Here we provide an alternate analytic derivation. 
Our approach avoids the use of numerical methods and instead relies on combinatorical algebraic structure of the existence of a perfect quantum solution, which may be useful for broader settings. 
The column canonical form from our framework is
\(\ket{\psi} \in \Im(E_{...} \otimes F_1 + E_{...} \otimes F_2) \cap \Im(E_{...} \otimes F_a + E_{...} \otimes F_b) \cap \Im(E_{...} \otimes F_{\alpha} + E_{...} \otimes F_{\beta}) \cap ...\).
Instead of computing the indexes in \(E\) operators directly, we obseve that the statement can be proved via the following approach. 
Firstly, we color \(F_1, F_a, F_{\alpha}...\) operators and their associated \(E\) operator's indexes by green, and the others by red. 
For any \(E_k\) and \(E_l\) operators belonging to \(i\)-th, \(i'\)-th row respectively, the number of times \(k\) is colored by green is either always even or always odd, and similarly for \(l\), moreover, 
the number of times \(k\) is colored by green plus the number of times \(l\) is colored by green is always odd. This is a constraint due to the wining condition that product of outcomes along a row must be 1. 
We define the parity operators \(M_j\), \(N_j\) similarly as in the analysis of the minimal setup such that for \(j\)-th image, the sign of \(\Im(E_k)\) in \(M_j\) is \(+1\) if and only if \(k\) is colored by green, otherwise, it is \(-1\). 
This rule also applies to sign of \(\Im(E_l)\) given by \(N_j\). 
Therefore, w.l.o.g, it is sufficient to consider the following case: in \(\prod_{j = 1}^n M_j\), sign of \(\Im(E_k)\) is odd number of \(+1\) times even number of \(-1\) which equals \(1\) for any \(k\), 
on the other hand, the sign of \(\Im(E_l)\) given by \(\prod_{j=1}^n N_j\) is even number of \(+1\) times odd number of \(-1\) which equals \(-1\) for any \(l\). This implies that \(\prod_{j=1}^n M_j = \mathbb{I}_A\) and \(\prod_{j=1}^n N_j = -\mathbb{I}_A\). 
However, \(M_j |_{\mathcal{H}_s} = N_j |_{\mathcal{H}_s}\) for any \(j\), where \(\mathcal{H}_s\) is defined similarly as before but for \(E\) operators. Therefore, we have arrived at a contradiction. 
All other cases have a perfect classical solution or a perfect quantum solution. 
The optimal solution for a \(2 \times n\) \QMRG refers to the maximization of the Magic-Rectangle inequality: 
\(\sum_{i, j} \langle O_{ij}^A \otimes O_{ij}^B \rangle + \sum_j \langle \prod_i O_{ij}^A \rangle - \sum_i \langle \prod_{j} O_{ij}^B \rangle \), 
where \(O_{ij}^A\) and \(O_{ij}^B\) are involutions that commute in any row and in any column respectively. It is obvious that for \(3 \times 3\) Mermin-Peres Magic Square the inequality achieves its maximal value in that game: \(9 + 3 - (-3) = 15\). 
Obtaining a solution for a \(2 \times n\) \QMRG, where \(n\) is odd, that maximizes the Magic-Rectangle inequality is left for future work. This does not affect our main results. 

For an arbitrary \QMRG, let us give one way to construct a new perfect solution and possibly a new valid setup from any two existing perfect solutions of the game. 
Firstly, let us fix one of the solutions. We notice that it is always possible to make Alice's and Bob's hilbert spaces sufficiently large such that any perfect quantum solution lives 
in some subspaces of the hilbert spaces. It is also always possible to find some local unitaries to rotate the other solution such that \PQSS and projectives of the two solutions live in two sets of mutually orthogonal hilbert subspaces.
We then intergrate the two strategies to create a new strategy following the rule in Expression \eqref{eq:psi_space}. The two \PQSS can always be integrated to create one corresponding \PQSS following Theorem~\ref{th:PQSS_form}. Therefore, we have created a new perfect solution. 
Similarly, the solution integration technique can be applied to create a new perfect solution out of a finite collection of perfect solutions. 

\section{Conclusion}

We have given a complete structural characterization of perfect quantum strategies for arbitrary quantum magic rectangle games. Our results identify necessary and sufficient constraints on both the shared state and the measurement operators, establishing a unified framework for perfect nonlocal strategies.

We showed that all perfect quantum solution states (PQSS) must obey a rigid algebraic--combinatorial structure, and proved that perfect strategies are impossible for $2 \times n$ games with odd $n$. We further introduced a quantum magic rectangle inequality capturing the optimal non-perfect regime.

Our work provides a structural foundation for understanding perfect nonlocal correlations. An interesting direction for future research is to extend these techniques to more general nonlocal games and to explore potential applications in quantum information and quantum cryptography.
\section*{Acknowledgments}
The author thanks Marius Junge for helpful discussions and for bringing this topic to their attention. 
This work was supported in part by teaching assistantship funding from the Department of Physics at the University of Illinois Urbana-Champaign.

\bibliographystyle{apsrev4-2}
\bibliography{refs}

\end{document}